\documentclass[journal,twoside,web,xcdraw,svgnames]{ieeecolor}
\usepackage{generic}
\usepackage{cite}
\usepackage{amsmath,amssymb,amsfonts}

\usepackage{hyperref}

\usepackage[T1]{fontenc}
\usepackage[utf8]{inputenc}
\usepackage{graphicx}
\usepackage{amsmath,systeme,amssymb,amsfonts}
\usepackage[version=4]{mhchem}
\let\labelindent\relax
\usepackage{siunitx}
\usepackage{enumitem}
\usepackage{longtable,tabularx}
\usepackage{comment}
\usepackage{textgreek}
\usepackage{float}
\usepackage[utf8]{inputenc}
\usepackage{tikz, pgfplots}
\usepackage{textcomp}
\usepackage{tcolorbox} % for colored boxes
\usepackage{graphicx}
\usepackage{amsmath}

\usepackage[version=4]{mhchem}
\usepackage{siunitx}
\usepackage{longtable,tabularx}
\usetikzlibrary{positioning, shapes, arrows}
\usepackage{thmtools}
\usepackage{textcomp}
\usepackage{soul}
\usepackage{float,amsfonts,amsthm,color,xcolor}
\usepackage{amssymb}
\usepackage{graphics} % for pdf, bitmapped graphics files
\usepackage{booktabs} % For formal tables
\usepackage{caption, subcaption}
\usepackage{breqn}
\usepackage{array, mathtools}
\usepackage{pgfplots}
\usepackage{siunitx}
\usepackage{algorithm}
\usepackage[noend]{algpseudocode}
\usepackage{tabulary}
\usetikzlibrary{arrows.meta}
\usepackage{stmaryrd} % for double brackets
\usepackage{varwidth}
\usepackage{comment}
\usetikzlibrary{positioning}

\usepackage{amsthm} \theoremstyle{definition} \newtheorem{theorem}{Theorem}  \newtheorem{assumption}{Assumption} \newtheorem{proposition}{Proposition} \newtheorem{lemma}{Lemma} \newtheorem{remark}{Remark} \newtheorem{corollary}{Corollary}

\def\BibTeX{{\rm B\kern-.05em{\sc i\kern-.025em b}\kern-.08em
    T\kern-.1667em\lower.7ex\hbox{E}\kern-.125emX}}
\title{Reward-Rate Congestion Games and Replicator--Dinkelbach Dynamics}
\author{Hassan Abdelraouf, Vaibhav Srivastava, and  Vijay Gupta
\thanks{H. Abdelraouf and V. Gupta are with the Elmore Family School of Electrical and Computer Engineering, Purdue University, West Lafayette, IN 47907, USA. (e-mails: abdelra5@purdue.edu, gupta869@purdue.edu).  V. Srivastava is with the Electrical and Computer Engineering, Michigan State University, East Lansing, MI 48824.(e-mail: vaibhav@msu.edu).}}

\begin{document}

\maketitle

\begin{abstract}

Reward rate is a key performance criterion in cyber-physical and robotic systems where time, workload, and coordination costs are limiting resources. We introduce reward-rate congestion games, where agents seek to maximize reward per unit execution time. The direct reward-rate game is generally not an exact potential game. We develop a Dinkelbach-based framework in which, for every fixed Dinkelbach parameter, the transformed game is an exact potential game. This yields a potential-level Dinkelbach iteration that terminates finitely at the optimal potential reward rate when the inner potential maximization problem is solved globally. We also provide a sufficient condition under which an equilibrium of the transformed game is an equilibrium of the original reward-rate game. To optimize aggregate performance, we introduce marginal externality corrections that make the corrected potential coincide with the Dinkelbach-transformed social reward-rate objective, thereby enabling optimization of the social reward rate. Finally, we develop a continuous-time replicator--Dinkelbach dynamics for reward-rate population games coupling fast replicator dynamics with a slow reward-rate update. We establish convergence of the fixed-parameter replicator dynamics,
global asymptotic and local exponential stability of the reduced Dinkelbach dynamics, and local exponential stability of the coupled system for sufficiently slow Dinkelbach updates. The framework is illustrated on a continuous task-allocation problem.
\end{abstract}

\begin{IEEEkeywords}
Reward-rate congestion games, potential games, replicator dynamics, task allocation.
\end{IEEEkeywords}
%%%%%%%%%%%%%%%%%%%%%%%%%%%%%%%%%%%%%%%%%%%%%%%%%%%%
\section{Introduction} \label{sec:introduction}

Congestion games provide a fundamental model for multi-agent decision-making over shared resources. 
%In these games, each agent selects a subset of resources, and the reward or cost associated with each resource depends on the number of agents using it. 
Since their introduction by Rosenthal~\cite{rosenthal1973class}, congestion games have been used to model 
%a wide range of networked systems, including 
routing, electrical grids, and network resource allocation problems~\cite{fotakis2002structure,ibars2010distributed,johari2004efficiency}. A central feature of congestion games is their connection to potential games: unilateral payoff variations can be represented by variations of a common potential function~\cite{monderer1996potential}. This potential structure
guarantees the existence of pure Nash equilibria and provides a natural basis for equilibrium computation and learning.

In many cyber-physical and robotic systems, however, performance is more naturally measured by reward per unit execution time than by total reward. Examples include search-and-rescue, environmental monitoring, warehouse automation, and inspection, where performance is measured by rates of detection, coverage, or task completion. Similar considerations arise for human--robot teams, where task decisions must account for robot execution time, human workload, and coordination costs. Thus, high total reward alone is insufficient when achieving it requires excessive time or human effort.

Reward-rate objectives introduce a structural difficulty. Standard congestion games are compatible with exact-potential methods because additive resource payoffs aggregate into a Rosenthal-type potential. In contrast, reward-rate payoffs consider the ratio of reward to execution time. Such ratio payoffs do not generally preserve the standard exact-potential structure of
congestion games. Consequently, decentralized learning and equilibrium computation methods for potential games cannot be applied directly to reward-rate congestion games.

This loss of potential structure is particularly important for learning and equilibrium computation, since the common potential function provides a global objective for analyzing many decentralized adjustment dynamics \cite{durand2018analysis,swenson2018best,heliou2017learning,cheung2020chaos,leslie2005individual,cominetti2010payoff,coucheney2015penalty}. It therefore motivates a transformation that recovers potential structure
while retaining the reward--time tradeoff. To address this challenge, we use the Dinkelbach transformation~\cite{dinkelbach1967nonlinear}. For each fixed value of the Dinkelbach parameter, the transformation replaces the reward-rate ratio by a reward-minus-time objective that is additive across resources, thereby recovering an exact-potential game. This structure enables a potential-level fractional optimization framework and, with appropriate externality corrections, optimization of the social reward rate.

%To address this challenge, this paper develops a Dinkelbach-based framework for reward-rate congestion games and their learning dynamics. 
The main contributions
of this paper are as follows. First, we introduce reward-rate congestion games,
where agents seek to maximize reward per unit execution time, and show that the
Dinkelbach-transformed game is an exact potential game for every fixed Dinkelbach parameter. This yields a potential-level Dinkelbach iteration
that terminates finitely at the optimal potential reward rate whenever the inner
problem is solved globally. Second, we introduce marginal externality
corrections that make the potential of the corrected transformed game
coincide with the Dinkelbach-transformed social reward-rate objective. Thus, optimizing the corrected potential optimizes the social reward rate. Third, we develop a continuous-time replicator--Dinkelbach dynamics for reward-rate population games. We establish convergence of the fixed-parameter replicator dynamics, global asymptotic and local exponential stability of the reduced Dinkelbach dynamics, and local exponential stability of the coupled system for sufficiently slow Dinkelbach updates. The framework is illustrated on a continuous population task-allocation problem.

%%%%%%%%%%%%%%%%%%%%%%%%%%%%%%%%%%%%%%%%%%%%
\section{Reward-Rate Congestion Games}\label{sec:reward-rate-congestion-games}

Consider a finite set of agents \(\mathcal N=\{1,\dots,N\}\) and a finite set of resources \(\mathcal E\). Each agent \(i\in\mathcal N\) selects an action
\(a_i\in\mathcal A_i\), where \(\mathcal A_i\subseteq 2^{\mathcal E}\). Let
\(
    a=(a_1,\dots,a_N)\in \mathcal A:=\prod_{i=1}^N \mathcal A_i
\)
denote an action profile. For each resource \(e\in\mathcal E\), define the corresponding congestion level by
\(
    n_e(a)
    :=
    \left|\{i\in\mathcal N: e\in a_i\}\right|.
\)
When resource \(e\) is used by \(k\) agents, it generates reward \(r_e(k)\) and incurs execution time \(t_e(k)>0\). Hence, the reward and execution time of agent
\(i\) under profile \(a\) are
\[
    R_i(a)
    =
    \sum_{e\in a_i} r_e(n_e(a)),
  \quad \text{and} \quad
    T_i(a)
    =
    \sum_{e\in a_i} t_e(n_e(a)).
\]
We assume that every feasible action is nonempty, i.e., \(a_i\neq\emptyset\) for all \(a_i\in\mathcal A_i\). Hence, \(T_i(a)>0\) for every agent \(i\) and every feasible profile \(a\). The corresponding reward-rate payoff is
\(
    J_i(a)
    =
    \frac{R_i(a)}{T_i(a)}.
\)
Thus, each agent seeks to maximize reward per unit execution time. We refer to the resulting game as a reward-rate congestion game.

The ratio form of \(J_i\) does not, in general, admit an exact potential.
Motivated by the Dinkelbach transformation, for each fixed
\(\rho\in\mathbb R\), the transformed payoff
\(
    Q_i^\rho(a)
    =
    R_i(a)-\rho T_i(a).
\)
Equivalently,
\(
    Q_i^\rho(a)
    =
    \sum_{e\in a_i}
    \left[
        r_e(n_e(a))-\rho t_e(n_e(a))
    \right].
\)
Define the transformed resource payoff
\(
    m_e^\rho(k)
    :=
    r_e(k)-\rho t_e(k).
\)
Then
\(
    Q_i^\rho(a)
    =
    \sum_{e\in a_i} m_e^\rho(n_e(a)).
\)

Recall that a game is an exact potential game if every unilateral change in a player’s payoff equals the corresponding change in a common potential function~\cite{monderer1996potential}.
For fixed \(\rho\),  define the Rosenthal-type potential 
\[
    \Phi^\rho(a)
    =
    \sum_{e\in\mathcal E}
    \sum_{q=1}^{n_e(a)}
    m_e^\rho(q).
\]
Equivalently,
\(
    \Phi^\rho(a)
    =
    \Phi_R(a)-\rho \Phi_T(a),
\)
where
\[
    \Phi_R(a)
    =
    \sum_{e\in\mathcal E}
    \sum_{q=1}^{n_e(a)} r_e(q),
    \qquad
    \Phi_T(a)
    =
    \sum_{e\in\mathcal E}
    \sum_{q=1}^{n_e(a)} t_e(q).
\]

\begin{proposition}
For every fixed \(\rho\in\mathbb R\), the transformed game with payoffs
\(\{Q_i^\rho\}_{i\in\mathcal N}\) is an exact potential game with potential
\(\Phi^\rho\).
\end{proposition}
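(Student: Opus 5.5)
The plan is to reduce the statement to Rosenthal's classical argument, since for fixed \(\rho\) the transformed payoff \(Q_i^\rho(a)=\sum_{e\in a_i} m_e^\rho(n_e(a))\) is exactly a congestion-game payoff with resource payoff functions \(m_e^\rho(\cdot)\). Concretely, we fix an agent \(i\), a profile \(a\), and an alternative action \(a_i'\in\mathcal A_i\). We then write \(a'=(a_i',a_{-i})\) and show directly that
\[
Q_i^\rho(a')-Q_i^\rho(a)=\Phi^\rho(a')-\Phi^\rho(a).
\]

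The first step is to record how congestion levels change under the unilateral deviation. We partition \(\mathcal E\) into four sets: \(a_i\cap a_i'\); \(a_i'\setminus a_i\); \(a_i\setminus a_i'\); and the resources outside \(a_i\cup a_i'\).
\begin{itemize}
\item On \(a_i'\setminus a_i\) we have \(n_e(a')=n_e(a)+1\).
\item On \(a_i\setminus a_i'\) we have \(n_e(a')=n_e(a)-1\).
\item On all other resources \(n_e(a')=n_e(a)\), because only agent \(i\) changed.
\end{itemize}

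The second step is to compute the change of the potential resource by resource. Since \(\Phi^\rho\) is a sum over \(e\) of partial sums \(\sum_{q=1}^{n_e} m_e^\rho(q)\), we get:
\begin{itemize}
\item Resources with unchanged congestion contribute zero.
\item Each \(e\in a_i'\setminus a_i\) contributes \(+m_e^\rho(n_e(a)+1)=m_e^\rho(n_e(a'))\).
\item Each \(e\in a_i\setminus a_i'\) contributes \(-m_e^\rho(n_e(a))\).
\end{itemize}

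The third step is to compute the change of \(Q_i^\rho\). The shared resources \(a_i\cap a_i'\) have the same congestion in \(a\) and \(a'\), so their terms cancel. What remains is
\[
\sum_{e\in a_i'\setminus a_i} m_e^\rho(n_e(a'))-\sum_{e\in a_i\setminus a_i'} m_e^\rho(n_e(a)),
\]
which matches the change of \(\Phi^\rho\) term by term.

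No step is a genuine obstacle. The only point requiring care is the bookkeeping that the partial sum in \(\Phi^\rho\) gains or loses exactly the top term, evaluated at the congestion level seen by agent \(i\) when it uses the resource. The key observation here is that \(\rho\) is a constant, so \(m_e^\rho\) depends only on \(k\).

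Alternatively, we can prove it more compactly by linearity. Rosenthal's theorem applied separately to reward payoffs \(r_e\) and time payoffs \(t_e\) gives that \(\Phi_R\) and \(\Phi_T\) are exact potentials for \(R_i\) and \(T_i\). Then \(\Phi_R-\rho\Phi_T\) is an exact potential for \(R_i-\rho T_i\), since the exact-potential identity is preserved under linear combinations with a fixed scalar.
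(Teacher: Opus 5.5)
Your proof is correct and is essentially the paper's argument: the paper likewise fixes a unilateral deviation, sets \(S^+=a_i'\setminus a_i\) and \(S^-=a_i\setminus a_i'\), and shows that both \(\Phi^\rho(a')-\Phi^\rho(a)\) and \(Q_i^\rho(a')-Q_i^\rho(a)\) equal \(\sum_{e\in S^+}m_e^\rho(n_e(a)+1)-\sum_{e\in S^-}m_e^\rho(n_e(a))\). Your explicit cancellation of the shared resources \(a_i\cap a_i'\) in \(Q_i^\rho\) spells out a step the paper leaves implicit, and your linearity variant via \(\Phi_R-\rho\Phi_T\) is an equally valid shortcut.
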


\begin{proof}
Fix an agent \(i\), an action profile \(a=(a_i,a_{-i})\), and an alternative
action \(a_i'\in\mathcal A_i\). Let \(a'=(a_i',a_{-i})\). Only resources that are added or removed by player \(i\)'s deviation can change
their congestion levels. Define $ S^+ := a_i'\setminus a_i$ and $S^- := a_i\setminus a_i'$.
For \(e\in S^+\), the congestion level increases from \(n_e(a)\) to
\(n_e(a)+1\). For \(e\in S^-\), it decreases from \(n_e(a)\) to \(n_e(a)-1\).
Therefore,
\[
    \Phi^\rho(a')-\Phi^\rho(a)
    =
    \sum_{e\in S^+} m_e^\rho(n_e(a)+1)
    -
    \sum_{e\in S^-} m_e^\rho(n_e(a)).
\]
Similarly,
\[
    Q_i^\rho(a')-Q_i^\rho(a)
    =
    \sum_{e\in S^+} m_e^\rho(n_e(a)+1)
    -
    \sum_{e\in S^-} m_e^\rho(n_e(a)).
\]
Hence,
\[
    \Phi^\rho(a')-\Phi^\rho(a)
    =
    Q_i^\rho(a')-Q_i^\rho(a),
\]
which proves that \(\Phi^\rho\) is an exact potential.
\end{proof}

Thus, although the original reward-rate payoffs \(J_i=R_i/T_i\) are generally
non-potential, the Dinkelbach-transformed payoffs \(Q_i^\rho=R_i-\rho T_i\)
define an exact potential game for every fixed \(\rho\).

We now define the potential-level reward-rate problem associated with the transformed potential. Since every feasible action is nonempty and \(t_e(k)>0\) for all \(e\in\mathcal E\) and all \(k\ge 1\), we have
\(
    \Phi_T(a)>0
\)
for all \(a\in\mathcal A\). The potential-level reward-rate objective is
\begin{equation}\label{eq:potential_reward_rate_problem}
    \rho_\Phi^\star
    =
    \max_{a\in\mathcal A}
    \frac{\Phi_R(a)}{\Phi_T(a)}.
\end{equation}
By the Dinkelbach transformation~\cite{dinkelbach1967nonlinear}, the optimal ratio is the unique zero of the associated parametric value function
\[
    \psi_\Phi(\rho)
    =
    \max_{a\in\mathcal A}
    \left\{
        \Phi_R(a)-\rho \Phi_T(a)
    \right\}
    =
    \max_{a\in\mathcal A}
    \Phi^\rho(a).
\]
Then \(\rho_\Phi^\star\) is the unique root of \(\psi_\Phi\), i.e.,
\(
    \psi_\Phi(\rho_\Phi^\star)=0,
\)
and
\[
    \rho<\rho_\Phi^\star
    \Rightarrow
    \psi_\Phi(\rho)>0,
    \qquad
    \rho>\rho_\Phi^\star
    \Rightarrow
    \psi_\Phi(\rho)<0.
\]

The corresponding potential-level Dinkelbach iteration is
\begin{equation}\label{eq:inner_max}
     a^k
    \in
    \arg\max_{a\in\mathcal A}
    \Phi^{\rho_k}(a),
\end{equation}
followed by
\begin{equation}\label{eq:outer_update}
    \rho_{k+1}
    =
    \frac{\Phi_R(a^k)}{\Phi_T(a^k)}.
\end{equation}

For each fixed \(\rho_k\), the profile \(a^k\) is a global maximizer of the exact
potential \(\Phi^{\rho_k}\). Hence, \(a^k\) is a pure Nash equilibrium of the
transformed game with payoffs \(\{Q_i^{\rho_k}\}_{i\in\mathcal N}\). However, a
pure Nash equilibrium of the transformed game need not be a global maximizer of
\(\Phi^{\rho_k}\). Therefore, convergence of the Dinkelbach iteration requires
the inner potential maximization problem to be solved globally.

\begin{theorem}\label{thrm:convergence_dinklach}
Suppose \(\Phi_T(a)>0\) for all \(a\in\mathcal A\), and initialize
\(
    \rho_0=\frac{\Phi_R(a^0)}{\Phi_T(a^0)}
\)
for some \(a^0\in\mathcal A\). Suppose that, at each iteration, the inner problem \eqref{eq:inner_max} is solved globally. Then the sequence \(\{\rho_k\}\) generated by \eqref{eq:inner_max}--\eqref{eq:outer_update} is nondecreasing and reaches \(\rho_\Phi^\star\) in finitely many iterations, where \(\rho_\Phi^\star\) is defined
in \eqref{eq:potential_reward_rate_problem}. The corresponding profile \(a^\star\)
globally solves \eqref{eq:potential_reward_rate_problem}.
\end{theorem}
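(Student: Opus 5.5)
The argument is the classical finite-Dinkelbach proof, applied to the potential-level pair \((\Phi_R,\Phi_T)\). Finiteness of \(\mathcal A\) replaces any convexity or continuity assumptions. I would maintain the invariant that each \(\rho_k\) is an attained ratio, i.e., \(\rho_k=\Phi_R(b)/\Phi_T(b)\) for some \(b\in\mathcal A\). This holds at \(k=0\) by the initialization and for \(k\ge1\) by \eqref{eq:outer_update} with \(b=a^{k-1}\). In particular \(\rho_k\le\rho_\Phi^\star\) for every \(k\), and \(\psi_\Phi(\rho_k)\ge \Phi^{\rho_k}(b)=0\).

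Next comes monotonicity. Since \(a^k\) globally maximizes \(\Phi^{\rho_k}\),
\[
\Phi_R(a^k)-\rho_k\Phi_T(a^k)=\psi_\Phi(\rho_k)\ge 0 .
\]
Dividing by \(\Phi_T(a^k)>0\) gives
\[
\rho_{k+1}-\rho_k=\frac{\psi_\Phi(\rho_k)}{\Phi_T(a^k)}\ge 0 ,
\]
so the sequence is nondecreasing. The increment is strictly positive if and only if \(\psi_\Phi(\rho_k)>0\).

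For the stopping characterization I would use the facts recalled before the theorem: \(\psi_\Phi(\rho)>0\) for \(\rho<\rho_\Phi^\star\) and \(\psi_\Phi(\rho_\Phi^\star)=0\). If needed, these follow directly from finiteness of \(\mathcal A\) and \(\Phi_T>0\). Take \(\bar a\) attaining \(\rho_\Phi^\star\). Then \(\rho<\rho_\Phi^\star\) implies \(\Phi^\rho(\bar a)>0\). Conversely, \(\Phi^{\rho_\Phi^\star}(a)\le 0\) for all \(a\), with equality at \(\bar a\). Hence, whenever \(\rho_k<\rho_\Phi^\star\), we get \(\rho_{k+1}>\rho_k\) strictly. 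If instead \(\rho_k=\rho_\Phi^\star\), then \(\psi_\Phi(\rho_k)=0\), so \(\rho_{k+1}=\rho_k\) and the iteration is stationary. In that case \(a^\star:=a^k\) satisfies \(\Phi_R(a^\star)-\rho_\Phi^\star\Phi_T(a^\star)=0\), that is, \(\Phi_R(a^\star)/\Phi_T(a^\star)=\rho_\Phi^\star\), so it globally solves \eqref{eq:potential_reward_rate_problem}.

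The only step needing care is finite termination. Every \(\rho_k\) lies in the finite set \(\{\Phi_R(a)/\Phi_T(a):a\in\mathcal A\}\). While \(\rho_k<\rho_\Phi^\star\), the sequence is strictly increasing, so no value can repeat. Therefore \(\rho_k<\rho_\Phi^\star\) can hold for at most \(|\mathcal A|\) indices. After that, \(\rho_k=\rho_\Phi^\star\), by the upper bound from the invariant. Combined with stationarity at \(\rho_\Phi^\star\), this proves the theorem. The global-solution hypothesis enters exactly once, in the identity \(\Phi^{\rho_k}(a^k)=\psi_\Phi(\rho_k)\). With only a Nash equilibrium of the transformed game, the strict-increase step would fail.
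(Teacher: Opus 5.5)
Your proof is correct and follows essentially the same route as the paper's. Both arguments rest on three points: every $\rho_k$ is an attained ratio, so $\rho_k\le\rho_\Phi^\star$; the identity $\rho_{k+1}=\rho_k+\psi_\Phi(\rho_k)/\Phi_T(a^k)$, combined with the sign property of $\psi_\Phi$, gives strict increase below $\rho_\Phi^\star$; and the set of attainable ratios is finite, which forces termination. Your version is somewhat more explicit. You get $\psi_\Phi(\rho_k)\ge 0$ directly from the invariant, you verify the sign property, and you spell out that $\psi_\Phi(\rho_\Phi^\star)=0$ makes the terminal profile attain $\rho_\Phi^\star$, a step the paper only asserts.
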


\begin{proof}
Since \(\rho_0=\Phi_R(a^0)/\Phi_T(a^0)\) is a feasible ratio, we have
\(
    \rho_0\le \rho_\Phi^\star.
\)
At iteration \(k\), because \eqref{eq:inner_max} is solved globally, we have
\(
    \psi_\Phi(\rho_k)
    =
    \Phi_R(a^k)-\rho_k\Phi_T(a^k).
\)
Using \eqref{eq:outer_update}, we obtain
\(
    \rho_{k+1}
    =
    \frac{\Phi_R(a^k)}{\Phi_T(a^k)}
    =
    \rho_k
    +
    \frac{\psi_\Phi(\rho_k)}{\Phi_T(a^k)}.
\)
Since \(\Phi_T(a^k)>0\), if \(\rho_k<\rho_\Phi^\star\), then the sign property of
\(\psi_\Phi\) gives
\(
    \psi_\Phi(\rho_k)>0,
\)
and therefore
\(
    \rho_{k+1}>\rho_k.
\)
Moreover, \(\rho_{k+1}\) is a feasible ratio, so
\(
    \rho_{k+1}\le \rho_\Phi^\star.
\)
Hence, \(\{\rho_k\}\) is nondecreasing and bounded above by
\(\rho_\Phi^\star\).

Since \(\mathcal A\) is finite, the set of feasible ratios
\(
    \left\{
    \frac{\Phi_R(a)}{\Phi_T(a)}:a\in\mathcal A
    \right\}
\)
is finite. Therefore, the nondecreasing sequence \(\{\rho_k\}\), whose elements
belong to this finite set, must terminate in finite time. Let \(\rho_K\) be its
terminal value. If \(\rho_K<\rho_\Phi^\star\), then the preceding argument implies
\(    \rho_{K+1}>\rho_K,
\)
which contradicts termination. Hence,
\(
    \rho_K=\rho_\Phi^\star.
\)
Finally, the terminal profile \(a^\star:=a^K\) satisfies
\(
    \frac{\Phi_R(a^\star)}{\Phi_T(a^\star)}
    =
    \rho_\Phi^\star,
\)
and therefore globally solves \eqref{eq:potential_reward_rate_problem}.
\end{proof}

Since the terminal profile \(a^\star\) globally maximizes the exact potential \(\Phi^{\rho_\Phi^\star}\), it is also a pure Nash equilibrium of the transformed game with payoffs
\(
Q_i^{\rho_\Phi^\star}=R_i-\rho_\Phi^\star T_i.
\)
The next result gives a sufficient condition under which an equilibrium of the transformed game is also an equilibrium of the original reward-rate game.

\begin{corollary}\label{col:same_NE}
Let \(\bar\rho\in\mathbb R\), and let \(a^\star\) be a pure Nash equilibrium of the transformed game with payoffs
\(
Q_i^{\bar\rho}(a)=R_i(a)-\bar\rho T_i(a).
\)
If
\(
    \frac{R_i(a^\star)}{T_i(a^\star)}
    =
    \bar\rho,
    \; \forall i\in\mathcal N,
\)
then \(a^\star\) is a pure Nash equilibrium of the direct reward-rate game
with payoffs
\(
J_i(a)=\frac{R_i(a)}{T_i(a)}.
\)
\end{corollary}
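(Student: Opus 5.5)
The plan is to argue directly from the Nash inequality of the transformed game and convert it into the reward-rate inequality by dividing by a positive execution time. Fix an agent \(i\in\mathcal N\) and an arbitrary unilateral deviation \(a_i'\in\mathcal A_i\), and set \(a'=(a_i',a_{-i}^\star)\). Since \(a^\star\) is a pure Nash equilibrium of the transformed game, we have \(Q_i^{\bar\rho}(a')\le Q_i^{\bar\rho}(a^\star)\), that is,
\[
R_i(a')-\bar\rho T_i(a')\le R_i(a^\star)-\bar\rho T_i(a^\star).
\]

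The key step is to use the hypothesis \(R_i(a^\star)/T_i(a^\star)=\bar\rho\), which gives \(R_i(a^\star)=\bar\rho T_i(a^\star)\). Substituting, the right-hand side vanishes, so the Nash inequality reduces to
\[
R_i(a')-\bar\rho T_i(a')\le 0 .
\]
Since every feasible action is nonempty and \(t_e(k)>0\), we have \(T_i(a')>0\). Dividing by \(T_i(a')\) yields
\[
J_i(a')=\frac{R_i(a')}{T_i(a')}\le\bar\rho=J_i(a^\star).
\]
Because \(i\) and \(a_i'\) were arbitrary, no agent can strictly improve its reward rate by a unilateral deviation. Hence \(a^\star\) is a pure Nash equilibrium of the direct reward-rate game.

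The only point needing care is the sign of the denominator in the final division. Positivity of \(T_i(a')\) holds at every feasible profile, including the deviated one, by the standing nonemptiness and positive-time assumptions. The condition must hold for every agent because the argument for agent \(i\) uses the equality \(J_i(a^\star)=\bar\rho\) for that same \(i\). Without it, the right-hand side \(R_i(a^\star)-\bar\rho T_i(a^\star)\) would not vanish, and the comparison with \(J_i(a^\star)\) would not follow.
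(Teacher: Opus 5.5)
Your proposal is correct and follows the paper's proof essentially step for step. You start from the transformed Nash inequality, make its right-hand side vanish via \(R_i(a^\star)=\bar\rho\, T_i(a^\star)\), and divide by \(T_i(a_i',a_{-i}^\star)>0\) to obtain \(J_i(a_i',a_{-i}^\star)\le\bar\rho=J_i(a^\star)\).
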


\begin{proof}
Since \(a^\star\) is a pure Nash equilibrium of the transformed game, for every
player \(i\) and every unilateral deviation \(a_i'\in\mathcal A_i\),
\[
    R_i(a^\star)-\bar\rho T_i(a^\star)
    \ge
    R_i(a_i',a_{-i}^\star)
    -
    \bar\rho T_i(a_i',a_{-i}^\star).
\]
By the assumed common-rate condition,
\(
R_i(a^\star)-\bar\rho T_i(a^\star)=0.
\)
Therefore,
\(
    R_i(a_i',a_{-i}^\star)
    -
    \bar\rho T_i(a_i',a_{-i}^\star)
    \le 0.
\)
Since \(T_i(a_i',a_{-i}^\star)>0\), it follows that
\(
    \frac{R_i(a_i',a_{-i}^\star)}
    {T_i(a_i',a_{-i}^\star)}
    \le
    \bar\rho.
\)
Using 
\(
    \bar\rho
    =
    \frac{R_i(a^\star)}{T_i(a^\star)}
\), we get 
\(
    J_i(a_i',a_{-i}^\star)
    \le
    J_i(a^\star),
    \; \forall a_i'\in\mathcal A_i.
\)
Thus no player can improve its direct reward rate by a unilateral deviation, and \(a^\star\) is a pure Nash equilibrium of the direct reward-rate game.
\end{proof}
Thus, an equilibrium of the transformed game is not, in general, a Nash equilibrium of the direct reward-rate game; Corollary \ref{col:same_NE} provides a sufficient condition under which the same profile is a Nash equilibrium of both games.
%%%%%%%%%%%%%%%%%%%%%%%%%%%%%%%%%%%%%%%%%%%%%%%%%%%%%%%%%
\section{Social-Reward-Rate  
Alignment via Externality Corrections}\label{sec:social-reward-rate-alignment}

The potential-level reward-rate objective in \eqref{eq:potential_reward_rate_problem}
does not generally coincide with the social reward-rate objective. Indeed, the aggregate social reward and aggregate social execution time are
\begin{align*}
 SW_R(a)
    & =
    \sum_{i=1}^N R_i(a)
    =
    \sum_{e\in\mathcal E} n_e(a) r_e(n_e(a)),  \quad \text{and} \\
     SW_T(a)
    &=
    \sum_{i=1}^N T_i(a)
    =
    \sum_{e\in\mathcal E} n_e(a) t_e(n_e(a)).
\end{align*}
In general,
\(
    \Phi_R(a)\neq SW_R(a),
    \;
    \Phi_T(a)\neq SW_T(a).
\)
Thus, optimizing the potential-level ratio
\(
    \frac{\Phi_R(a)}{\Phi_T(a)}
\)
does not necessarily optimize the social reward rate
\(
    \frac{SW_R(a)}{SW_T(a)}.
\)

To align transformed individual incentives with the social reward-rate objective, we assign each resource its marginal contribution to aggregate reward and execution time. Specifically, define

\begin{align*}
   \widetilde r_e(k)
    &:=
    k r_e(k)-(k-1)r_e(k-1),  \quad \text{and} \\
     \widetilde t_e(k)
    &:=
    k t_e(k)-(k-1)t_e(k-1),
\end{align*}
with the convention
\(
    r_e(0)=0 \) and \(
    t_e(0)=0.
\)
The corrected payoff of agent \(i\) is
\(
    \widetilde R_i(a)
    =
    \sum_{e\in a_i}\widetilde r_e(n_e(a))\) 
    and \(
    \widetilde T_i(a)
    =
    \sum_{e\in a_i}\widetilde t_e(n_e(a)).
\)
For fixed \(\rho\), define the corrected transformed payoff
\(
    \widetilde Q_i^\rho(a)
    =
    \widetilde R_i(a)-\rho \widetilde T_i(a).
\)

\begin{theorem}
For every fixed \(\rho\), the corrected transformed game with payoffs \(\{\widetilde Q_i^\rho\}_{i\in\mathcal N}\) is an exact potential game with potential
\(
    \widetilde \Phi^\rho(a)
    =
    SW_R(a)-\rho SW_T(a).
\)
\end{theorem}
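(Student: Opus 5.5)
The plan is to reduce the statement to the Proposition proved earlier, by recognizing the corrected game as a congestion game with modified resource payoffs. For fixed \(\rho\), I will define the corrected transformed resource payoff
\(
\widetilde m_e^\rho(k):=\widetilde r_e(k)-\rho\,\widetilde t_e(k).
\)
Then
\(
\widetilde Q_i^\rho(a)=\sum_{e\in a_i}\widetilde m_e^\rho(n_e(a))
\)
has exactly the congestion-game form used before. The earlier argument for the Proposition used only this additive per-resource structure and never any property of \(m_e^\rho\). Repeating it verbatim, with \(\widetilde m_e^\rho\) in place of \(m_e^\rho\), shows that the corrected game is an exact potential game with Rosenthal potential
\[
\widetilde\Phi^\rho(a)=\sum_{e\in\mathcal E}\sum_{q=1}^{n_e(a)}\widetilde m_e^\rho(q).
\]
Concretely, for a deviation of player \(i\) from \(a_i\) to \(a_i'\), I will split the resources into \(S^+=a_i'\setminus a_i\) and \(S^-=a_i\setminus a_i'\). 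I will then check that both the potential difference and the payoff difference equal
\[
\sum_{e\in S^+}\widetilde m_e^\rho(n_e(a)+1)-\sum_{e\in S^-}\widetilde m_e^\rho(n_e(a)).
\]

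The second step is to identify this Rosenthal potential with \(SW_R(a)-\rho SW_T(a)\). By linearity, the inner sum splits into a reward part and a time part:
\[
\sum_{q=1}^{n}\widetilde r_e(q)-\rho\sum_{q=1}^{n}\widetilde t_e(q).
\]
Each part telescopes. Using the convention \(r_e(0)=t_e(0)=0\),
\[
\sum_{q=1}^{n}\widetilde r_e(q)=\sum_{q=1}^{n}\bigl[q r_e(q)-(q-1)r_e(q-1)\bigr]=n\,r_e(n)-0\cdot r_e(0)=n\,r_e(n),
\]
and the same computation gives \(\sum_{q=1}^{n}\widetilde t_e(q)=n\,t_e(n)\). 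Summing over \(e\in\mathcal E\) with \(n=n_e(a)\) gives
\[
\widetilde\Phi^\rho(a)=\sum_e n_e(a)r_e(n_e(a))-\rho\sum_e n_e(a)t_e(n_e(a))=SW_R(a)-\rho\,SW_T(a),
\]
which is the claimed potential.

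There is no substantial obstacle. The only points needing care are the boundary term at \(q=1\), where the convention \(r_e(0)=t_e(0)=0\) makes the telescoping start cleanly, and resources with \(n_e(a)=0\), which contribute zero to both sides. As a sanity check, I will verify the exact-potential identity directly: \(SW_R(a')-SW_R(a)\) changes only on \(S^+\cup S^-\). For \(e\in S^+\) the change is \((n+1)r_e(n+1)-n r_e(n)=\widetilde r_e(n+1)\), and for \(e\in S^-\) it is \(-\widetilde r_e(n)\), with the same for the time terms. This matches \(\widetilde Q_i^\rho(a')-\widetilde Q_i^\rho(a)\) exactly.
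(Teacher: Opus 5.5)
Your proposal is correct and follows the paper's proof: the telescoping identity $\sum_{q=1}^{k}\widetilde r_e(q)=k\,r_e(k)$ (and likewise for $\widetilde t_e$) identifies the Rosenthal potential of the corrected game with $SW_R(a)-\rho\,SW_T(a)$, and the exact-potential property follows from the unilateral-deviation argument of Proposition~1 applied to the corrected resource payoffs. Your direct check of the change in $SW_R-\rho\,SW_T$ over $S^+$ and $S^-$ is a harmless extra confirmation.
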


\begin{proof}
By construction,
\[
    \sum_{q=1}^{k}\widetilde r_e(q)
    =
    \sum_{q=1}^{k}
    \bigl[q r_e(q)-(q-1)r_e(q-1)\bigr]
    =
    k r_e(k),
\]
and similarly,
\(
    \sum_{q=1}^{k}\widetilde t_e(q)
    =
    k t_e(k).
\)

Hence, the Rosenthal reward and execution-time potentials of the corrected game satisfy
\begin{align*}
  \widetilde\Phi_R(a)
    &=
    \sum_{e\in\mathcal E} n_e(a)r_e(n_e(a))
    =
    \mathrm{SW}_R(a),  \quad \text{and} \\
    \widetilde\Phi_T(a)
    &=
    \sum_{e\in\mathcal E} n_e(a)t_e(n_e(a))
    =
    \mathrm{SW}_T(a).
\end{align*}
Therefore,
\[
    \widetilde\Phi^\rho(a)
    =
    \widetilde\Phi_R(a)-\rho \widetilde\Phi_T(a)
    =
    \mathrm{SW}_R(a)-\rho \mathrm{SW}_T(a).
\]
The exact-potential property then follows from the same unilateral-deviation argument as in Proposition~1. 
\end{proof}

Consequently, since \(SW_T(a)>0\) for all feasible profiles, the social reward-rate problem
\[
    \rho_{SW}^\star
    =
    \max_{a\in\mathcal A}
    \frac{SW_R(a)}{SW_T(a)}
\]
admits the same potential-level Dinkelbach construction as in Section~\ref{sec:reward-rate-congestion-games}. Under global solution of each inner problem,
\[
    a^k
    \in
    \arg\max_{a\in\mathcal A}
    \left\{
        SW_R(a)-\rho_k SW_T(a)
    \right\}, \quad \rho_{k+1}
    =
    \frac{SW_R(a^k)}{SW_T(a^k)}
\]
reaches the optimal social reward rate in finitely many iterations.

% can be solved, under global inner maximization, through the corrected
% potential-level Dinkelbach iteration

% followed by
% \[
%     \rho_{k+1}
%     =
%     \frac{SW_R(a^k)}{SW_T(a^k)}.
% \]
%%%%%%%%%%%%%%%%%%%%%%%%%%%%%%%%%%%%%%%%%%%%%%%%%%%%%%%%%
\section{Task Allocation as a Singleton Congestion Game}

% Game-theoretic formulations have been used to model multi-robot task allocation, where agents repeatedly select tasks and adapt their allocations over time~\cite{park2021multi}. 
Task allocation\cite{park2021multi} provides a natural setting for the proposed reward-rate congestion-game framework. 
%We therefore specialize the preceding model to task-allocation problems. 
Let \(\mathcal T=\{1,\dots,M\}\) denote a finite set of tasks, and suppose each agent selects one task. The resulting game is a singleton congestion game in which tasks are the resources.
% Each agent selects one
% task, so that
% \(
%     a_i\in \mathcal T,
%     \; i\in\mathcal N .
% \)
% Thus, the task-allocation problem is a singleton congestion game in which tasks play the role of resources. 
% For a task \(j\in\mathcal T\), define the set of agents assigned to task \(j\) by
For a task \(j\), let 
\[
    S_j(a):=\{i\in\mathcal N:a_i=j\}, \quad \text{and} \quad  n_j(a):=|S_j(a)|
\]
denote the assigned coalition and its size.
% and let
% \(
%  n_j(a):=|S_j(a)|  
% \)
% denote the corresponding coalition size.

Suppose that task \(j\) generates individual reward \(r_j(k)\) and individual execution time \(t_j(k)>0\) when \(k\) agents are assigned to it. Then the
reward and execution time of agent \(i\), with \(a_i=j\), are
\(
    R_i(a)=r_j(n_j(a)), \) and \(
    T_i(a)=t_j(n_j(a)).
\)
% The direct reward-rate payoff is therefore
% \[
%     J_i(a)=\frac{r_{a_i}(n_{a_i}(a))}
%     {t_{a_i}(n_{a_i}(a))}.
% \]
%For fixed \(\rho\), the Dinkelbach-transformed payoff is
The corresponding fixed-parameter transformed payoff is
\[
    Q_i^\rho(a)
    =
    r_{a_i}(n_{a_i}(a))
    -
    \rho t_{a_i}(n_{a_i}(a)).
\]
By the previous results, the transformed task-allocation game is an exact
potential game with potential
\[
    \Phi^\rho(a)
    =
    \sum_{j=1}^M
    \sum_{q=1}^{n_j(a)}
    \left[
        r_j(q)-\rho t_j(q)
    \right].
\]

In the language of anonymous hedonic games \cite{jang2018anonymous}, the coalition assigned to task \(j\) is \(S_j(a)\), and each agent's payoff depends only on the selected task and the
size of its coalition. Thus, for both the direct reward-rate payoff and the fixed-parameter transformed payoff, the task-allocation model is equivalently a singleton congestion game. Nash stability of the allocation is therefore equivalent to pure Nash equilibrium: no agent can improve its payoff by unilaterally switching tasks.
%%%%%%%%%%%%%%%%%%%%%%%%%%%%%%%%%%%%%%%%%%%%%%%%%%%%%%%%%%%%%

\section{Continuous-Time Replicator--Dinkelbach Dynamics}

We now consider a continuous population formulation of the social reward-rate problem. Let
\(
    x=(x_1,\dots,x_M)\in\Delta_M,
\)
where \(x_j\) denotes the fraction of the population assigned to task \(j\). Let \(B_j(x_j)\) and \(H_j(x_j)\) denote the aggregate
%continuous counterparts of the total 
reward and %total 
execution time associated with task \(j\). Define
\[
    R(x)=\sum_{j=1}^M B_j(x_j),
    \qquad
    T(x)=\sum_{j=1}^M H_j(x_j),
\]
and assume \(T(x)>0\) for all \(x\in\Delta_M\). The continuous reward-rate problem is
\begin{equation}\label{eq:continuous_reward_rate}
    \rho^\star
    =
    \max_{x\in\Delta_M}
    \frac{R(x)}{T(x)}.
\end{equation}

For a fixed \(\rho\), define the Dinkelbach-transformed objective
\[
    W^\rho(x)
    =
    R(x)-\rho T(x).
\]
The marginal transformed payoff of task \(j\) is
\[
    u_j^\rho(x)
    :=
    \frac{\partial W^\rho(x)}{\partial x_j}
    =
    B_j'(x_j)-\rho H_j'(x_j).
\]
These marginal transformed payoffs are the population analogue of the externality-corrected payoffs in Section~\ref{sec:social-reward-rate-alignment}.
Let
\(
    \bar u^\rho(x)
    =
    \sum_{j=1}^M x_j u_j^\rho(x).
\)
For fixed \(\rho\), consider the replicator dynamics \cite{taylor1978evolutionary,hofbauer1998evolutionary}
\begin{equation}\label{eq:rd_inner}
    \dot x_j
    =
    x_j\left(u_j^\rho(x)-\bar u^\rho(x)\right),
    \qquad j=1,\dots,M.
\end{equation}

The dynamics \eqref{eq:rd_inner} is the inner, or fast, dynamics for the
Dinkelbach-transformed objective with \(\rho\) fixed. To solve the reward-rate
problem \eqref{eq:continuous_reward_rate}, we couple this inner dynamics with a
slow Dinkelbach update for \(\rho\):
\begin{equation}\label{eq:full_two_time_scale}
\begin{aligned}
    \dot x_j
    &=
    x_j\left(u_j^\rho(x)-\bar u^\rho(x)\right),
    \qquad j=1,\dots,M,\\
    \dot \rho
    &=
    \varepsilon
    \left(
        \frac{R(x)}{T(x)}-\rho
    \right),
    \qquad 0<\varepsilon\ll 1.
\end{aligned}
\end{equation}
The small parameter \(\varepsilon\) separates the time scales: \(x\) evolves on the fast time scale, while \(\rho\) evolves slowly according to the currently achieved reward rate \(R(x)/T(x)\). We first analyze the frozen-\(\rho\) inner dynamics \eqref{eq:rd_inner}, and then study the reduced slow dynamics induced by \eqref{eq:full_two_time_scale}.

\begin{assumption}\label{ass:regularity}
For every $j$, $B_j$ and $H_j$ are thrice continuously
differentiable on an open interval containing $[0,1]$. Moreover,
there exists $\mu>0$ such that $B_j''(s)\leq -\mu$ and $H_j''(s)\geq 0$, for each $ s\in[0,1]$, 
and $T(x)>0$, for every $x\in\Delta_M$. 
\end{assumption}
The above assumption implies $W^\rho$ is uniformly strongly concave and has a unique maximizer.

\begin{assumption}\label{ass:interiority}
There exists a compact interval
$\mathcal I\subset\mathbb R_{\geq0}$ containing $\rho^\star$
such that $ x^\star(\rho)\in\operatorname{int}(\Delta_M)$, for every $\rho\in\mathcal I$. 
\end{assumption}

\begin{lemma}\label{lem:optimizer_regular}
Under Assumptions~\ref{ass:regularity} and~\ref{ass:interiority},
the optimizer map $ \rho\mapsto x^\star(\rho)$
is twice continuously differentiable on $\mathcal I$.
\end{lemma}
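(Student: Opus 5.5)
The plan is to characterize $x^\star(\rho)$ through the KKT conditions of $\max_{x\in\Delta_M} W^\rho(x)$ and then apply the implicit function theorem. By Assumption~\ref{ass:regularity}, $W^\rho$ is strongly concave on $\Delta_M$, so the maximizer is unique for every $\rho$. For $\rho\in\mathcal I$ it is interior by Assumption~\ref{ass:interiority}. Hence the inequality constraints $x_j\ge 0$ are inactive, and optimality is equivalent to the existence of a multiplier $\lambda\in\mathbb R$ with
\[
F(x,\lambda,\rho):=\begin{pmatrix} B_j'(x_j)-\rho H_j'(x_j)-\lambda,\ j=1,\dots,M\\ \sum_{j=1}^M x_j-1\end{pmatrix}=0 .
\]
Since $B_j,H_j\in C^3$, the map $F$ is $C^2$ jointly in $(x,\lambda,\rho)$ on an open neighbourhood of the relevant set.

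The key step is to show that the Jacobian of $F$ with respect to $(x,\lambda)$ is nonsingular. That Jacobian is the bordered matrix
\[
\begin{pmatrix} D(\rho) & -\mathbf 1\\ \mathbf 1^\top & 0\end{pmatrix},
\qquad D(\rho)=\mathrm{diag}\bigl(B_j''(x_j)-\rho H_j''(x_j)\bigr).
\]
The main obstacle is controlling the sign of $\rho$. Because $\mathcal I\subset\mathbb R_{\ge0}$, we have $\rho\ge 0$, and together with $H_j''\ge0$ and $B_j''\le-\mu$ this gives $D(\rho)\preceq-\mu I$. Now suppose $(v,w)$ lies in the kernel. Then $Dv=w\mathbf 1$ and $\mathbf 1^\top v=0$, so $v^\top Dv=w\,\mathbf 1^\top v=0$, which forces $v=0$ and then $w=0$. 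So the Jacobian is invertible.

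The implicit function theorem then gives, around each $\rho_0\in\mathcal I$, a $C^2$ local solution branch $(x(\rho),\lambda(\rho))$. By continuity this branch stays in the interior of the simplex for $\rho$ near $\rho_0$, so it satisfies the KKT conditions. By uniqueness of the maximizer (strong concavity), it therefore coincides with $x^\star(\rho)$ on $\mathcal I$ near $\rho_0$. Patching these local branches shows that $\rho\mapsto x^\star(\rho)$ is $C^2$ on $\mathcal I$, with one-sided derivatives at the endpoints coming from the same local extension. As a byproduct, differentiating $F=0$ gives an explicit formula for $dx^\star/d\rho$, which can be used later.
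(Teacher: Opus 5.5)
Your proposal is correct and follows essentially the same route as the paper. Both arguments use the interior KKT system, show the bordered Jacobian is nonsingular via $v^\top Dv = w\,\mathbf 1^\top v = 0$ with $D\preceq-\mu I$, and combine the implicit function theorem with uniqueness of the maximizer to patch local $C^2$ branches. You are in fact slightly more explicit than the paper in two places: you use $\mathcal I\subset\mathbb R_{\ge0}$ to justify $D\preceq-\mu I$, and you invoke KKT sufficiency to identify the local branch with $x^\star(\rho)$.
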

\begin{proof}
Since $ x^\star(\rho)\in\operatorname{int}(\Delta_M)$, the KKT conditions for
maximizing $W^\rho$ over $\Delta_M$ are
\[
    B_j'(x_j^\star)
    -
    \rho H_j'(x_j^\star)
    =
    \lambda,
    \qquad j=1,\ldots,M,
\]
together with $\mathbf 1^\top x^\star=1$.
Define
\[
    F(x,\lambda,\rho)
    =
    \begin{bmatrix}
    B_1'(x_1)-\rho H_1'(x_1)-\lambda\\
    \vdots\\
    B_M'(x_M)-\rho H_M'(x_M)-\lambda\\
    \mathbf 1^\top x-1
    \end{bmatrix}.
\]
Its Jacobian with respect to $(x,\lambda)$ is $D_{(x,\lambda)}F
    =
    \begin{bmatrix}
        D & -\mathbf 1\\
        \mathbf 1^\top & 0
    \end{bmatrix}$, 
where $ D
    =
    \operatorname{diag}
    \left(
        B_j''(x_j)-\rho H_j''(x_j)
    \right)$.

By Assumption~\ref{ass:regularity}, $D\preceq-\mu I$. Consider a vector $\begin{bmatrix}
    v^\top \\ \alpha
\end{bmatrix}^\top$. Suppose
\[
    Dv-\alpha\mathbf 1=0,
    \quad \text{and} \quad 
    \mathbf 1^\top v=0.
\]
Multiplying the first equality by $v^\top$ gives $ v^\top Dv = \alpha\mathbf 1^\top v=    0$. 
Since $D$ is negative definite, $v=0$, and consequently
$\alpha=0$. Hence, $D_{(x,\lambda)}F$ is nonsingular.

The implicit-function theorem therefore implies that
$(x^\star(\rho),\lambda^\star(\rho))$ is twice continuously
differentiable locally in $\rho$. Since the maximizer is unique for every
\(\rho\in\mathcal I\), these local representations define a twice continuously
differentiable optimizer map \(x^\star(\rho)\) on \(\mathcal I\).
%Uniqueness of the maximizer
%ensures that $x^\star(\rho)$ is continuously differentiable on
%$\mathcal I$.
\end{proof}

It is well known that the simplex \(\Delta_M\) is forward invariant under
replicator dynamics \cite{hofbauer1998evolutionary}. Moreover, for fixed \(\rho\),
\[
\dot W^\rho(x(t))
    =
    \sum_{j=1}^M
    x_j
    \left(
        u_j^\rho(x)-\bar u^\rho(x)
    \right)^2
    \ge 0,
\]
so the transformed objective is nondecreasing along its trajectories. The next result characterizes the stability of the frozen-\(\rho\) fast
subsystem. 

\begin{theorem}\label{thm:fast}
Fix $\rho\geq0$. Under Assumption~\ref{ass:regularity}, every
solution of the frozen-$\rho$ replicator dynamics~\eqref{eq:rd_inner}
with an interior initial condition satisfies $ x(t)\longrightarrow x^\star(\rho)$. 
Thus, $x^\star(\rho)$ is globally asymptotically attractive
relative to $\operatorname{int}(\Delta_M)$.

Moreover, under Assumption~\ref{ass:interiority},
$x^\star(\rho)$ is locally exponentially stable for every
$\rho\in\mathcal I$, uniformly in $\rho$.
\end{theorem}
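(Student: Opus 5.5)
The proof has two parts: global attraction from interior initial conditions, and uniform local exponential stability.

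\textbf{Global attraction.} Fix $\rho\ge0$. By Assumption~\ref{ass:regularity}, $W^\rho$ is strongly concave on $\Delta_M$ with modulus $\mu$, since $B_j''-\rho H_j''\le-\mu$. Hence it has a unique maximizer $x^\star=x^\star(\rho)$, which may lie on the boundary. I would use the entropy-type Lyapunov function
\[
V(x)=\sum_{j:\,x_j^\star>0} x_j^\star\log\frac{x_j^\star}{x_j},
\]
defined on $\operatorname{int}(\Delta_M)$. Along \eqref{eq:rd_inner}, using $\sum_j x_j^\star=1$, one gets
\[
\dot V=-\sum_j x_j^\star\bigl(u_j^\rho-\bar u^\rho\bigr)=(x-x^\star)^\top u^\rho(x),
\]
where $u^\rho(x)=\nabla W^\rho(x)$. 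Strong concavity gives
\[
(x-x^\star)^\top\nabla W^\rho(x)\le (x-x^\star)^\top\nabla W^\rho(x^\star)-\mu\|x-x^\star\|^2 .
\]
The KKT conditions at $x^\star$ say $u_j^\rho(x^\star)=\lambda$ on the support and $u_j^\rho(x^\star)\le\lambda$ off it. Therefore $(x-x^\star)^\top\nabla W^\rho(x^\star)=\sum_{j\notin\mathrm{supp}}x_j\,(u_j^\rho(x^\star)-\lambda)\le0$, and so $\dot V\le-\mu\|x-x^\star\|^2$.

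Since $V\ge0$ by Gibbs' inequality and $V$ is nonincreasing, $V(x(t))$ stays bounded. It follows that $x_j(t)$ stays bounded away from $0$ for $j$ in the support. Integrating gives $\int_0^\infty\|x-x^\star\|^2\,dt<\infty$. Together with uniform continuity of $x(t)$ (bounded vector field on a compact set), Barbalat's lemma yields $x(t)\to x^\star(\rho)$. Alternatively one can argue via LaSalle on sublevel sets of $V$. The main technical care is the boundary case, where $V$ excludes coordinates outside the support. The argument above handles this through the KKT inequality.

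\textbf{Uniform local exponential stability.} For $\rho\in\mathcal I$, the maximizer $x^\star(\rho)$ is interior. Writing $\xi_j=x_j^\star(\rho)$ and $\Xi=\operatorname{diag}(\xi)$, the Jacobian of \eqref{eq:rd_inner} restricted to the tangent space $\{v:\mathbf 1^\top v=0\}$ is $J=(\Xi-\xi\xi^\top)D$, with $D=\operatorname{diag}(B_j''-\rho H_j'')\preceq-\mu I$. This uses $u_j^\rho(x^\star)=\bar u^\rho(x^\star)$; the derivative of $\bar u$ contributes only terms along $\xi$ that vanish after projection.

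To see the spectrum, set $P=\Xi-\xi\xi^\top$, which is positive semidefinite with kernel $\operatorname{span}(\mathbf 1)$ and positive definite on the tangent space. Then $PD$ is similar to $P^{1/2}DP^{1/2}$ on that subspace. Its eigenvalues are therefore real and at most $-\mu\,\lambda_{\min}^+(P)$, where $\lambda_{\min}^+$ denotes the smallest positive eigenvalue. In place of spectral reasoning I would use the quadratic Lyapunov function $v^\top P^{+}v$, restricted to the tangent space, to get an explicit decay rate.

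For uniformity in $\rho$, note that $\mathcal I$ is compact and $x^\star(\cdot)$ is continuous by Lemma~\ref{lem:optimizer_regular}. Hence $\min_j \xi_j(\rho)\ge\underline\xi>0$ on $\mathcal I$, which gives $\lambda_{\min}^+(P)\ge c>0$ uniformly. The remainder terms are controlled uniformly because $B_j,H_j\in C^3$ and $\rho$ is bounded. These facts together give uniform constants for the neighborhood radius, the overshoot factor and the decay rate.

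The step I expect to need the most care is this uniformity bound on the neighborhood and remainder. I would handle it with a uniform Taylor bound on the vector field over the compact set $\mathcal I\times\Delta_M$.
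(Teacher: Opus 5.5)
Your proposal is correct. The global-attraction part is essentially the paper's argument: the same KL Lyapunov function (your support-restricted sum is the paper's $V$ with the convention $0\log 0=0$), followed by Barbalat's lemma.

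The only difference there is the key inequality. The paper uses the concavity gradient inequality plus optimality of $x^\star$ to get $\dot V\le -(W^\rho(x^\star)-W^\rho(x))$. It applies Barbalat to this optimality gap and then uses uniqueness of the maximizer to pass from convergence of values to convergence of $x(t)$. You instead combine strong monotonicity of $\nabla W^\rho$ with the KKT inequalities to get $\dot V\le-\mu\|x-x^\star\|^2$ directly. This skips the last step and makes the treatment of a boundary maximizer explicit.

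For the uniform local exponential stability your route genuinely differs. The paper keeps the same $V$. It combines $\dot V\le-\frac{\mu}{2}\|x-x^\star(\rho)\|^2$ with quadratic sandwich bounds $c_1\|x-x^\star(\rho)\|^2\le V(x)\le c_2\|x-x^\star(\rho)\|^2$. These bounds hold on a neighborhood that is uniform because $x^\star(\mathcal I)$ stays uniformly away from the boundary. So the paper needs no Jacobian and no Taylor-remainder estimate for the vector field.

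You linearize instead, and this buys explicit spectral information: real eigenvalues bounded by $-\mu\,\lambda_{\min}^+(P)\le-\mu\underline\xi$. One phrasing in your Jacobian computation is slightly off, although the resulting formula $J=(\Xi-\xi\xi^\top)D$ is right. The gradient of $\bar u$ at $\xi$ is $\lambda\mathbf 1+D\xi$. Only the $\lambda\mathbf 1$ part vanishes on tangent vectors; the $D\xi$ part survives and produces the $-\xi\xi^\top$ term in $P$.

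The two routes are closer than they look. For $\mathbf 1^\top v=0$ one has $v^\top P^{+}v=v^\top\Xi^{-1}v$, which is exactly the Hessian quadratic form of the KL divergence at $\xi$. So your quadratic Lyapunov function is the second-order approximation of the paper's $V$. You had already proved $\dot V\le-\mu\|x-x^\star\|^2$ in the first part. Adding the local quadratic bounds on $V$, as the paper does, would therefore give uniform exponential stability immediately. It would also make unnecessary the uniform remainder estimate you flagged as the delicate step.
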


\begin{proof}
Fix \(\rho\) and write \(x^\star=x^\star(\rho)\). Consider
\[
    V(x)
    =
    \sum_{j=1}^M x_j^\star\log\frac{x_j^\star}{x_j}.
\]
Since \(x(0)\in\operatorname{int}(\Delta_M)\), the replicator dynamics preserves positivity, so \(x_j(t)>0\) for all \(j\) and all \(t\ge 0\). Hence the logarithmic terms in \(V(x(t))\) are finite, and \(V\) is well-defined along solutions.  Since \(V\) is the Kullback--Leibler divergence from \(x^\star\) to \(x\),
\(
    V(x)\ge 0,
\) with equality if and only if \(x=x^\star\).
Along \eqref{eq:rd_inner},
\[
    \dot V
    =
    -\sum_{j=1}^M x_j^\star\frac{\dot x_j}{x_j}
    =
    -\sum_{j=1}^M x_j^\star
    \left(u_j^\rho(x)-\bar u^\rho(x)\right).
\]
Since \(\sum_{j=1}^M x_j^\star=1\) and
\(
    \bar u^\rho(x)=\sum_{j=1}^M x_j u_j^\rho(x),
\)
we obtain
\[
    \dot V
    =
    -\sum_{j=1}^M x_j^\star u_j^\rho(x)
    +
    \sum_{j=1}^M x_j u_j^\rho(x).
\]
Therefore,
\(
    \dot V
    =
    -\nabla W^\rho(x)^\top(x^\star-x).
\)
By concavity of \(W^\rho\),
\[
    W^\rho(x^\star)-W^\rho(x)
    \le
    \nabla W^\rho(x)^\top(x^\star-x).
\]
Since \(x^\star\) maximizes \(W^\rho\), the left-hand side is nonnegative. Hence,
\[
    \dot V
    \le
    -\left(W^\rho(x^\star)-W^\rho(x)\right)
    \le 0.
\]
Define the optimality gap
\(
    g(t):=W^\rho(x^\star)-W^\rho(x(t))\ge 0.
\)
From
\(
    \dot V
    \le
    -g(t),
\)
it follows that
\[
    \int_0^T g(t)\,dt
    \le
    V(x(0))-V(x(T))
    \le
    V(x(0)).
\]
Therefore, $g(t)$ is integrable. 
% \[
%     \int_0^\infty g(t)\,dt<\infty.
% \]
Moreover, \(g(t)\) is uniformly continuous. Indeed,
\(
    \dot g(t)
    =
    -\nabla W^\rho(x(t))^\top \dot x(t).
\)
Since \(x(t)\in\Delta_M\), and \(\Delta_M\) is compact, \(\nabla W^\rho\) is
bounded on \(\Delta_M\). Also, the vector field in \eqref{eq:rd_inner} is
continuous on \(\Delta_M\), and hence \(\dot x(t)\) is bounded. Therefore,
\(\dot g(t)\) is bounded, which implies that \(g(t)\) is uniformly continuous. Hence, by Barbalat's lemma,
\(
    g(t)\to 0.
\)
Thus,
\(
    W^\rho(x(t))\to W^\rho(x^\star).
\)
Since \(W^\rho\) is strictly concave, \(x^\star\) is the unique maximizer, and therefore
\(
    x(t)\to x^\star.
\)

To establish the local exponential property, let
$\rho\in\mathcal I$. Uniform strong concavity gives
\[
    W^\rho(x^\star(\rho))-W^\rho(x)
    \geq
    \frac{\mu}{2}
    \|x-x^\star(\rho)\|^2.
\]
Hence, for the KL Lyapunov function used above,
\[
    \dot V
    \leq
    -\frac{\mu}{2}
    \|x-x^\star(\rho)\|^2.
\]
By Lemma~\ref{lem:optimizer_regular}, $x^\star(\rho)$ is
continuous in $\rho$. Since $\mathcal I$ is compact and
$x^\star(\rho)$ is interior, the equilibrium manifold is
uniformly separated from the boundary of the simplex.
Consequently, in a uniform neighborhood of this manifold,
\[
    c_1\|x-x^\star(\rho)\|^2
    \leq V(x)
    \leq
    c_2\|x-x^\star(\rho)\|^2,
\]
where $c_1,c_2>0$ are independent of $\rho$. Therefore,
$x^\star(\rho)$ is locally exponentially stable uniformly over
$\rho\in\mathcal I$.
\end{proof}

Theorem~\ref{thm:fast} therefore provides the boundary-layer stability
property required for the two-time-scale analysis. We next consider the
reduced slow dynamics obtained by restricting the fast state to the optimizer
manifold \(x=x^\star(\rho)\). Substituting \(x=x^\star(\rho)\) into
\eqref{eq:full_two_time_scale} gives
\begin{equation}\label{eq:reduced_rho}
    \dot\rho
    =
    \varepsilon\bigl(h(\rho)-\rho\bigr),
\end{equation}
where
\(
    h(\rho)
    :=
    \frac{R(x^\star(\rho))}{T(x^\star(\rho))}.
\)

\begin{theorem}\label{thm:reduced}
Under Assumptions~\ref{ass:regularity} and
\ref{ass:interiority}, the optimal value $\rho^\star$ in \eqref{eq:continuous_reward_rate} is globally
asymptotically stable and locally exponentially stable for the reduced dynamics \eqref{eq:reduced_rho}. 
\end{theorem}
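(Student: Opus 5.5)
The plan is to exploit the fact that \eqref{eq:reduced_rho} is a scalar ODE. I will show that the vector field $f(\rho):=h(\rho)-\rho$ has the sign of $\rho^\star-\rho$. Stability then follows from the quadratic Lyapunov function $(\rho-\rho^\star)^2$, and the exponential rate from a linearization at $\rho^\star$.

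\textbf{Step 1: link to the parametric value function.} Define the continuous analogue of $\psi_\Phi$,
\[
\psi(\rho):=\max_{x\in\Delta_M}W^\rho(x)=W^\rho(x^\star(\rho)).
\]
Then
\[
\psi(\rho)=R(x^\star(\rho))-\rho\,T(x^\star(\rho))=T(x^\star(\rho))\bigl(h(\rho)-\rho\bigr).
\]
Since $T>0$ on $\Delta_M$, we get $\operatorname{sign} f(\rho)=\operatorname{sign}\psi(\rho)$.

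The Dinkelbach argument of Section~\ref{sec:reward-rate-congestion-games} carries over verbatim with $\Delta_M$ compact in place of $\mathcal A$ finite:
\begin{itemize}
\item $\psi$ is a pointwise maximum of affine functions of $\rho$, so it is convex and continuous.
\item It is strictly decreasing, because for $\rho_1<\rho_2$ we have $\psi(\rho_2)\le \psi(\rho_1)-(\rho_2-\rho_1)\min_{\Delta_M}T$ and $\min_{\Delta_M}T>0$.
\item $\psi(\rho^\star)=0$, since $R-\rho^\star T\le 0$ on $\Delta_M$ with equality at a maximizer of \eqref{eq:continuous_reward_rate}.
\end{itemize}
Hence $f(\rho)>0$ for $\rho<\rho^\star$, $f(\rho)<0$ for $\rho>\rho^\star$, and $\rho^\star$ is the unique equilibrium of \eqref{eq:reduced_rho}.

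\textbf{Step 2: global asymptotic stability.} Take $V(\rho)=(\rho-\rho^\star)^2$. Then
\[
\dot V=2\varepsilon(\rho-\rho^\star)f(\rho)<0 \quad\text{for all } \rho\neq\rho^\star.
\]
$V$ is radially unbounded, and $f$ is continuous because $x^\star(\cdot)$ is continuous by Berge's maximum theorem with a unique maximizer. So Lyapunov's theorem gives global asymptotic stability. Equivalently, in one dimension, trajectories are monotone and bounded and must converge to the unique zero.

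\textbf{Step 3: local exponential stability.} Near $\rho^\star$, Assumption~\ref{ass:interiority} and Lemma~\ref{lem:optimizer_regular} make $x^\star(\cdot)$ $C^2$, so $f$ is $C^1$ there. By the envelope theorem (the KKT multiplier term vanishes since $\mathbf 1^\top \tfrac{d}{d\rho}x^\star=0$ and $\nabla_x W^\rho(x^\star)=\lambda^\star\mathbf 1$), we have $\psi'(\rho)=-T(x^\star(\rho))$. Differentiating $f=\psi/T(x^\star)$ at $\rho^\star$, where $\psi(\rho^\star)=0$, gives
\[
f'(\rho^\star)=\frac{\psi'(\rho^\star)}{T(x^\star(\rho^\star))}=-1 .
\]
Equivalently, $h'(\rho^\star)=0$. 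The linearization is therefore $\dot{\delta}=-\varepsilon\delta$, which yields local exponential stability with rate $\varepsilon$.

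\textbf{Main obstacle.} The delicate point is the global domain of $h$. Assumption~\ref{ass:regularity} gives strong concavity of $W^\rho$, and hence a unique, continuous $x^\star(\rho)$, only for $\rho\ge 0$. Outside $\mathcal I$ the optimizer may also lie on the boundary, so differentiability is unavailable there.

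The sign argument in Step 1 needs only that $x^\star(\rho)$ is some maximizer, so it holds regardless. Continuity of $f$ is needed only for well-posedness. To secure it, I would first try to show that $[0,\infty)$ is forward invariant, or else state the result for $\rho\ge0$ as the relevant domain, as in Theorem~\ref{thm:fast}.

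If $h(0)<0$ is possible, forward invariance of $[0,\infty)$ fails at $\rho=0$. In that case I would need $\rho^\star\ge 0$ (implied by $\rho^\star\in\mathcal I\subset\mathbb R_{\ge0}$) together with the observation that $f(\rho)>0$ for all $\rho<\rho^\star$. This makes every $\rho<\rho^\star$ increase monotonically, so trajectories starting at $\rho\ge0$ never leave $[0,\infty)$.
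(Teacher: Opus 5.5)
Your proposal is correct and follows the paper's route. The identity $\psi(\rho)=T(x^\star(\rho))\bigl(h(\rho)-\rho\bigr)$ transfers the Dinkelbach sign property to the vector field, the quadratic Lyapunov function gives global asymptotic stability, and the KKT/envelope identity $\psi'(\rho)=-T(x^\star(\rho))$ yields the linearization $\dot{\tilde\rho}=-\varepsilon\tilde\rho$.

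Your extra care is a sound refinement of points the paper leaves implicit: you prove the sign property directly on the compact simplex, and you note that Assumption~\ref{ass:regularity} gives a unique, continuous $x^\star(\rho)$ only for $\rho\ge 0$, with $[0,\infty)$ forward invariant because $\rho^\star\ge 0$. The case $h(0)<0$ that you worry about cannot occur, since the sign property already forces $h(0)\ge 0$.
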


\begin{proof}
Define
\[
    \psi(\rho)
    =
    \max_{x\in\Delta_M}
    \{R(x)-\rho T(x)\}.
\]
Since \(x^\star(\rho)\) maximizes \(R(x)-\rho T(x)\),
\[
    \psi(\rho)
    =
    R(x^\star(\rho))-\rho T(x^\star(\rho))
    =
    T(x^\star(\rho))(h(\rho)-\rho).
\]
Because \(T(x^\star(\rho))>0\), the sign of \(h(\rho)-\rho\) is the same as the
sign of \(\psi(\rho)\). By the Dinkelbach sign property,
\[
    \rho<\rho^\star \Rightarrow h(\rho)-\rho>0,
    \qquad
    \rho>\rho^\star \Rightarrow h(\rho)-\rho<0.
\]
Consider 
\(
    V_\rho(\rho)=\frac12(\rho-\rho^\star)^2.
\)
Along \eqref{eq:reduced_rho},
\[
    \dot V_\rho
    =
    \varepsilon(\rho-\rho^\star)\bigl(h(\rho)-\rho\bigr).
\]
If \(\rho>\rho^\star\), then \(h(\rho)-\rho<0\), and hence
\(\dot V_\rho<0\). If \(\rho<\rho^\star\), then \(h(\rho)-\rho>0\), and again
\(\dot V_\rho<0\). Therefore,
\(
    \dot V_\rho<0,
    \; \forall \rho\neq \rho^\star.
\)
Thus, \(\rho^\star\) is globally asymptotically stable for the reduced slow
dynamics.

It remains to establish local exponential stability. Define
\[
    \psi(\rho)
    =
    W^\rho(x^\star(\rho))
    =
    T(x^\star(\rho))(h(\rho)-\rho).
\]

Differentiating and using the KKT condition
$\nabla W^\rho(x^\star)=\lambda\mathbf 1$ gives
\[
\begin{aligned}
    \psi'(\rho)
    &=
    \nabla W^\rho(x^\star)^\top{x^\star}'(\rho)
    -T(x^\star(\rho))\\
    &=
    \lambda\mathbf1^\top{x^\star}'(\rho)
    -T(x^\star(\rho))
    =
    -T(x^\star(\rho)),
\end{aligned}
\]
where $\mathbf1^\top{x^\star}'(\rho)=0$ follows from
$\mathbf1^\top x^\star(\rho)=1$.

Since $\psi(\rho^\star)=0$,
\[
    \left.
    \frac{d}{d\rho}(h(\rho)-\rho)
    \right|_{\rho=\rho^\star}
    =
    \frac{\psi'(\rho^\star)}
         {T(x^\star(\rho^\star))}
    =
    -1.
\]
Therefore, the linearization of the reduced dynamics at
$\rho^\star$ is
\(
    \dot{\widetilde\rho}
    =
    -\varepsilon\widetilde\rho,
\)
and $\rho^\star$ is locally exponentially stable.
\end{proof}

The preceding results provide the fast- and slow-subsystem stability properties required for the singular-perturbation analysis.
\begin{theorem}[Two-time-scale exponential stability]
\label{thm:two_time_scale}
Suppose Assumptions~\ref{ass:regularity} and
\ref{ass:interiority} hold. Then there exists
$\varepsilon^\star>0$ such that, for every
$0<\varepsilon<\varepsilon^\star$, $\bigl(x^\star(\rho^\star),\rho^\star\bigr)$
is a locally exponentially stable equilibrium of the coupled
replicator--Dinkelbach dynamics~\eqref{eq:full_two_time_scale}.
\end{theorem}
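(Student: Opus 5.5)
The plan is to invoke the standard singular-perturbation result for exponential stability (Khalil, Theorem 11.4). We first put the coupled system \eqref{eq:full_two_time_scale} into standard form. Because $\mathbf 1^\top x$ is invariant, we work in the reduced coordinates $z=(x_1,\dots,x_{M-1})$, with $x_M=1-\sum_{j<M}x_j$, on a neighborhood of $x^\star(\rho^\star)$ lying in $\operatorname{int}(\Delta_M)$. We then rescale time by $\tau=\varepsilon t$, so that $\rho'=h_0(z,\rho):=R(x)/T(x)-\rho$ and $\varepsilon z'=f(z,\rho)$. Here $f$ is the replicator field \eqref{eq:rd_inner}, which is $C^2$ near the equilibrium by Assumption~\ref{ass:regularity}. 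The quasi-steady state is $z=z^\star(\rho)$, which is $C^2$ on $\mathcal I$ by Lemma~\ref{lem:optimizer_regular}. It is an isolated root of $f(\cdot,\rho)=0$ in the interior: interior rest points of the replicator dynamics satisfy $u_j^\rho=\bar u^\rho$ for all $j$, which is exactly the KKT system. Next we introduce the boundary-layer variable $y=z-z^\star(\rho)$. This yields the reduced system \eqref{eq:reduced_rho}, and the boundary-layer system $dy/ds=f(y+z^\star(\rho),\rho)$ with $\rho$ frozen.

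The hypotheses of the theorem then need to be checked in turn. (i) The origin is an equilibrium of both systems and $(z^\star(\rho^\star),\rho^\star)$ is an equilibrium of the full system, since $h(\rho^\star)=\rho^\star$. (ii) We need a Lyapunov function for the reduced system satisfying quadratic bounds near $\rho^\star$. We take $V_\rho=\tfrac12(\rho-\rho^\star)^2$. By the derivative computation in Theorem~\ref{thm:reduced}, $h'(\rho^\star)-1=-1$, so $\partial V_\rho/\partial\rho\,(h(\rho)-\rho)\le -c(\rho-\rho^\star)^2$ on a small interval. (iii) We need a boundary-layer Lyapunov function $W(y,\rho)$ with quadratic bounds uniform in $\rho$, $\dot W\le -c_3\|y\|^2$, and $\|\partial W/\partial\rho\|\le c_4\|y\|$. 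For this we use the KL function of Theorem~\ref{thm:fast}, $W(y,\rho)=\sum_j x_j^\star(\rho)\log\bigl(x_j^\star(\rho)/x_j\bigr)$. That theorem already provides the uniform quadratic bounds and the decay $\dot W\le -\tfrac{\mu}{2}\|x-x^\star(\rho)\|^2$. The $\rho$-derivative equals $\sum_j {x_j^\star}'(\rho)\bigl(\log(x_j^\star/x_j)+1\bigr)$. Because $\sum_j {x_j^\star}'=0$, this reduces to $\sum_j {x_j^\star}'\log(x_j^\star/x_j)$, which is $O(\|y\|)$ since $x^\star$ stays uniformly interior and ${x^\star}'$ is bounded on compact $\mathcal I$. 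Equivalently, one could simply show that the Jacobian of $f$ at $z^\star(\rho)$ is Hurwitz uniformly in $\rho$: on the tangent space it equals $\operatorname{diag}(x^\star)D$ projected, which is similar to a negative definite matrix.

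The interconnection conditions come next. We need $|\partial V_\rho/\partial\rho\,[h_0(z,\rho)-h_0(z^\star(\rho),\rho)]|\le \beta_1|\rho-\rho^\star|\,\|y\|$, which holds because $R/T$ is $C^1$ and $T>0$. We also need the cross term from the moving manifold, $\partial W/\partial\rho\cdot\rho'$ together with $-\nabla_yW\cdot {z^\star}'(\rho)\rho'$, to be bounded by $\beta_2\|y\|\bigl(|\rho-\rho^\star|+\|y\|\bigr)$. This follows from the gradient bounds on $W$, the boundedness of ${z^\star}'$, and $|\rho'|\le L(|\rho-\rho^\star|+\|y\|)$. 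Forming the composite Lyapunov function $(1-d)V_\rho+dW$ then gives negative definiteness of the resulting $2\times2$ quadratic form for $\varepsilon<\varepsilon^\star=\alpha_1\alpha_2/(\alpha_1\gamma+\beta_1\beta_2)$. This yields local exponential stability in $\tau$, and hence in $t$ for each fixed $\varepsilon>0$.

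The main obstacle is step (iii): guaranteeing the uniform-in-$\rho$ bounds, especially the bound on $\partial W/\partial\rho$. It also has to be done carefully in the reduced coordinates, because the simplex constraint makes the full Jacobian singular in the $\mathbf 1$ direction. If the KL route turns out to be messy, the fallback is to linearize the entire system directly. The Jacobian at the equilibrium is block-triangular up to $O(\varepsilon)$ coupling, with a Hurwitz fast block and slow entry $-\varepsilon$ plus an $\varepsilon$-sized term. A standard eigenvalue-perturbation argument, or Khalil's linear two-time-scale lemma, then gives a Hurwitz Jacobian for small $\varepsilon$, which suffices for local exponential stability.
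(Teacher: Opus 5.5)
Your proposal is correct and follows the paper's route (slow time, the reduced dynamics of Theorem~\ref{thm:reduced}, the uniformly exponentially stable boundary layer of Theorem~\ref{thm:fast}, and Khalil's Theorem~11.4). It also fills in what the paper leaves implicit: the explicit $\partial W/\partial\rho$ and interconnection bounds, and the reduced coordinates $z$, which are genuinely needed because on $\mathbb R^M$ the replicator Jacobian at $x^\star$ has the extra eigenvalue $-\bar u^{\rho}(x^\star)$ (left eigenvector $\mathbf 1$), and this eigenvalue is positive in the numerical example. Your linearization fallback is in fact stronger than you state: since $x^\star(\rho^\star)$ is an interior maximizer of $R/T$, the $z$-gradient of $R/T$ vanishes there, so the Jacobian is exactly block upper-triangular with diagonal blocks $\partial f/\partial z$ (Hurwitz) and $-\varepsilon$, which gives local exponential stability for every $\varepsilon>0$.
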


\begin{proof}
Using the slow time $s=\varepsilon t$, the coupled dynamics has the
standard singularly perturbed form with slow variable $\rho$ and
fast variable $x$. By Lemma~\ref{lem:optimizer_regular}, the
isolated equilibrium $x^\star(\rho)$ of the fast subsystem
depends smoothly on $\rho$. Under
Assumption~\ref{ass:regularity}, the system functions and their
required derivatives are locally bounded.

By Theorem~\ref{thm:reduced}, the equilibrium $\rho^\star$ of
the reduced system is locally exponentially stable. By
Theorem~\ref{thm:fast}, the equilibrium $x^\star(\rho)$ of the
boundary-layer system is locally exponentially stable uniformly
in $\rho\in\mathcal I$. Therefore, the hypotheses of
\cite[Thm.~11.4]{khalil2002nonlinear} are satisfied, and the
result follows.
\end{proof}

Thus, for sufficiently small \(\varepsilon\), time-scale separation preserves
the local exponential stability of the reward-rate optimum: the fast
replicator dynamics tracks the optimizer \(x^\star(\rho)\), while the slow
Dinkelbach update drives \(\rho\) toward \(\rho^\star\).

\begin{remark}
The analysis is not specific to replicator dynamics. The same
singular-perturbation argument applies to any sufficiently smooth fast
learning dynamics whose equilibrium \(x^\star(\rho)\) solves the
fixed-\(\rho\) potential problem and is uniformly locally exponentially
stable in \(\rho\in\mathcal I\).
\end{remark}

\section{Numerical Example}
We illustrate the proposed replicator--Dinkelbach dynamics on a continuous task-allocation problem with three tasks: victim search, damage inspection, and communication relay/mapping. For each task \(j\in\{1,2,3\}\), let
\[
    B_j(x_j)=\alpha_j(1-e^{-\beta_j x_j}),
    \qquad
    H_j(x_j)=c_jx_j+d_jx_j^2,
\]
where \(x_j\) is the fraction of the population assigned to task \(j\). The
function \(B_j\) models saturating task reward, while \(H_j\) models increasing
execution-time burden due to congestion. We use the parameters
\[
\begin{array}{c|cccc}
\text{task }j & \alpha_j & \beta_j & c_j & d_j \\ \hline
1 & 10   & 4.0 & 1.2 & 3.0 \\
2 & 7    & 2.5 & 0.8 & 1.2 \\
3 & 5.5  & 1.5 & 0.5 & 0.5
\end{array}
\]
and define
\(
    R(x)=\sum_{j=1}^3 B_j(x_j),
    \;
    T(x)=\sum_{j=1}^3 H_j(x_j).
\)
The objective is
\(
    \rho^\star
    =
    \max_{x\in\Delta_3}
    {R(x)}/{T(x)}.
\)
For a fixed \(\rho\), the marginal transformed payoff is
\[
    u_j^\rho(x)
    =
    \alpha_j\beta_j e^{-\beta_jx_j}
    -
    \rho(c_j+2d_jx_j).
\]
We simulate the replicator--Dinkelbach dynamics
\eqref{eq:full_two_time_scale} with \(\varepsilon=0.05\).
As a reference solution, we first compute the optimal reward-rate pair \((x^\star,\rho^\star)\) using the static Dinkelbach iteration. Starting from \(\rho_0\ge 0\), each iteration solves
\(
    x^k
    \in
    \arg\max_{x\in\Delta_3}
    \left\{
        R(x)-\rho_k T(x)
    \right\},
\)
and updates
\(
    \rho_{k+1}
    =
    {R(x^k)}/{T(x^k)}.
\)
Since \(R\) is strictly concave and \(T\) is strictly convex, \(R-\rho_kT\) is strictly concave for every \(\rho_k\ge 0\). Hence, the inner problem has a
unique global maximizer over \(\Delta_3\). For the parameters above, the iteration gives
\(
    x^\star
    \approx
    (0.1920,\;0.2797,\;0.5282)^\top,
    \;
    \rho^\star
    \approx
    11.1941.
\)

Figure~\ref{fig:rd_dinkelbach_sim} shows that, from different initial
conditions, all allocation trajectories converge to the same optimizer
\(x^\star\), while the corresponding Dinkelbach variables converge to
\(\rho^\star\). This illustrates the convergence behavior established by
the analysis.

\begin{figure}[H]
    \centering
    \raisebox{-0.5\height}{%
        \includegraphics[width=0.4\linewidth]{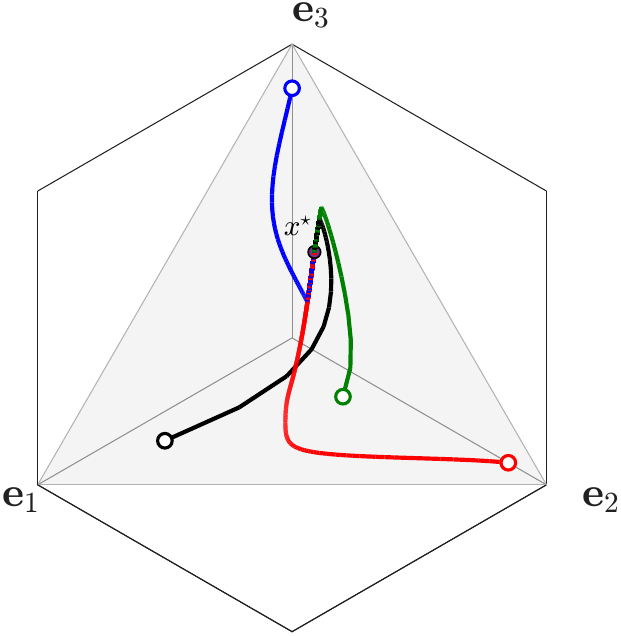}
    }
    \hspace{0.04\linewidth}
    \raisebox{-0.5\height}{%
        \includegraphics[width=0.3\linewidth]{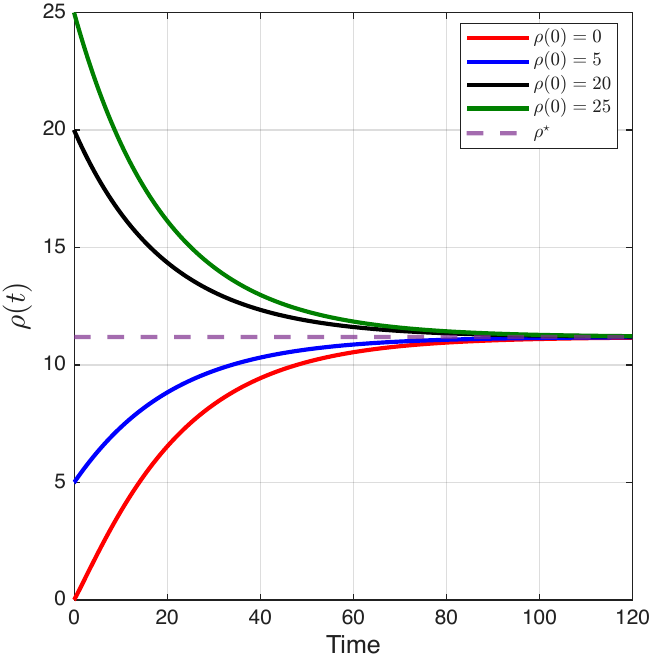}
    }
    \caption{Convergence of the replicator--Dinkelbach dynamics.}
    \label{fig:rd_dinkelbach_sim}
\end{figure}

\section{Conclusion}

This paper developed a Dinkelbach-based framework for reward-rate congestion
games. The fixed-parameter transformation recovers an exact potential
structure, yielding finite termination of the potential-level Dinkelbach
iteration under global inner maximization. Marginal externality corrections
align the potential with the social reward-rate objective. For continuous
task allocation, the proposed replicator--Dinkelbach dynamics yields
convergence of the fast and reduced subsystems and local exponential
stability of the coupled equilibrium for sufficiently slow Dinkelbach
updates. Future work will extend the framework to nonconcave transformed
potentials and to discrete-time learning dynamics with global convergence
guarantees.
\section{References}

\bibliographystyle{IEEEtran}
\bibliography{refs}

\end{document}